\newtheorem{lemma}{Lemma}
\newtheorem{proposition}{Proposition}
\newtcolorbox[auto counter]{mybox}[2][]{
	enhanced,
	breakable,
	colback=blue!5!white,
	colframe=blue!75!black,
	fonttitle=\bfseries,
	title=Box \thetcbcounter: #2,#1
}
\begin{document}

\title{Spin minimum uncertainty states for refined uncertainty relations}
\author{Hao Dai}
\email{dhao@bimsa.cn}
\affiliation{Beijing Institute of Mathematical Sciences and Applications, Beijing 101408 , China}

\author{Yue Zhang}
\email{Corresponding author, zhangyue115@amss.ac.cn}
\affiliation{State Key Laboratory of Mathematical Sciences, Academy of Mathematics and Systems Science, Chinese Academy of Sciences,  Beijing 100190, China}
\affiliation{School of Mathematical Sciences, University of  Chinese Academy of Sciences, Beijing 100049, China}

\begin{abstract}
Minimum uncertainty states of the conventional Heisenberg uncertainty relation have been extensively studied and are often regarded as the most classical quantum states from the perspective of uncertainty, providing valuable insight into the nature of quantumness and its potential applications. In this work, we investigate the minimum uncertainty states associated with an information-theoretic refinement of the Heisenberg uncertainty relation in general spin systems. Using two different approaches, the matrix formulation and the Wick symbol representation, we derive explicit expressions for the states that saturate the uncertainty bound. We show that spin coherent states indeed achieve minimum uncertainty, consistent with their conventional identification as the classical states of spin systems. Moreover, we also identify additional classes of minimum uncertainty states beyond the coherent family. Finally, we compare the spin-system results with the previously studied bosonic case and elucidate the origin of the differences between the two settings. 
\end{abstract}

\pacs{03.65.Ta, 03.67.-a}

\maketitle

\section{Introduction}

Inherent to quantum mechanics and fundamentally distinct from classical physics, the well-known Heisenberg uncertainty relation asserts that certain pairs of physical observables cannot be simultaneously measured with arbitrary precision, which is a cornerstone of our understanding of quantum theory \cite{Heis1927,Robe1929}. Since its inception, a wide range of mathematical frameworks and generalizations have been developed to extend the conventional uncertainty relation to diverse quantum systems \cite{Kenn1927,Dodo1980,Maas1988,Hall2004,luo2005,Luo2006,Wu2009,Wehn2010,Rast2013,Bran2013,Busc2014,Kech2014,Cole2017,Dodo2018,Kett2020,Fade2022,Fais2023,Hall2023,Fan2024}. These ongoing efforts have not only enriched the theoretical foundations of quantum mechanics but have also led to significant applications across various domains of quantum science, including quantum entanglement \cite{Guhn2004,Horo2009}, Einstein-Podolsky-Rosen steering \cite{Reid2009}, quantum coherence \cite{Luo2017,Yuan2017,Zhan2020,Zhao2022,Fu2023,Fu2025}, quantum state estimation \cite{Zhu2022}, quantum computation \cite{Rene2016,Rene2017,Li20241,Li20242,Wang2025}, and quantum metrology \cite{Giov2006,Pezz2018}.
States for which the uncertainty inequality is satisfied with equality are referred to as minimum uncertainty states. Such states have been the subject of extensive investigation across a wide range of physical systems \cite{Roma1968,Stol1970,Stol1971,Stol1972,Bert1971,Rusc1976,Arag1976,Delb1977,Cani1977,Bacr1978,Cani1978,Milb1984,Orsz1988,Vacc1990,Berg1991,Forb2003,Pegg2005,Fu2020,Li2024}. For the canonical pair of position and momentum observables, it has been established that all minimum uncertainty states of the standard Heisenberg uncertainty relation in infinite-dimensional systems are unitarily equivalent to coherent states \cite{Stol1970,Stol1971,Stol1972}. More recently, it has been shown that all Gaussian states, including mixed Gaussian states, constitute the complete set of minimum uncertainty states for the information-refined Heisenberg uncertainty relation, which provides a tighter bound than the conventional formulation \cite{luo2005,Fu2020}.


Finite-dimensional systems, such as spin or angular-momentum systems, have also been studied extensively in the literature \cite{Rusc1976,Arag1976,Delb1977,Bacr1978,Forb2003,Pegg2005,Li2024}. Nevertheless, a complete characterization of minimum uncertainty states in such systems remains an ongoing area of research. For instance, spin coherent states are known to minimize the standard Heisenberg uncertainty relation for angular momentum components \cite{Rusc1976,Arag1976}, and most existing results focus on pure states. Except in two-dimensional systems, the minimum uncertainty states associated with the information-refined uncertainty relation generally include both certain pure states and certain mixed states \cite{Li2024}. This naturally raises the question of how minimum uncertainty states can be characterized in generic finite-dimensional systems. 

The purpose of this work is twofold: (i) to provide a clear characterization of the minimum uncertainty states that saturate the information-refined uncertainty relation for angular momentum operators in generic finite-dimensional systems, and (ii) to elucidate the connection between these finite-dimensional minimum uncertainty states and Gaussian states in infinite-dimensional systems.

This article is organized as follows. In Sec. \ref{sec:unc}, we briefly review the standard Heisenberg uncertainty relation and its information-refined extension. In Sec. \ref{sec:refunc}, we present the information-refined uncertainty relation for spin systems and derive the conditions satisfied by minimum uncertainty states. In Sec. \ref{sec:mat}, we compute the matrix representations of the minimum uncertainty states and show that they correspond to spin coherent states, squeezed states, and certain diagonal states. In Sec. \ref{sec:wic}, we rederive the minimum uncertainty states using the Wick-symbol formalism and establish a one-to-one correspondence between the solutions obtained via the two approaches. Moreover, we show that, in an appropriate limiting regime, the minimum uncertainty states of finite-dimensional systems converge to particular Gaussian states in infinite-dimensional systems. Finally, we conclude with a discussion in Sec. \ref{sec:sum}.


\section{Quantum uncertainty relation}\label{sec:unc}

The standard Heisenberg uncertainty relation
\begin{equation}\label{eq:hei}
    V(\rho,X)V(\rho,Y)\geq \frac{1}{4}|\textup{tr}\rho[X,Y]|^2 
\end{equation} 
holds for any two observables $ X $ and $ Y $, and any quantum state $\rho$ \cite{Heis1927,Robe1929}. Here, $V(\rho,X)=\tr(\rho X^2)-(\tr\rho X)^2$ is the variance and $[X,Y]=XY-YX$ is the commutator. The variance contains the uncertainty of the state $\rho$ with respect to the observable $X$ and it actually can be separated into two parts \cite{luo2005}:
\begin{equation*}
V(\rho,X)=I(\rho,X)+C(\rho, X),
\end{equation*}
where $I(\rho,X)=\tr (\rho X^2)-\tr(\sqrt{\rho}X\sqrt{\rho}X),$ and $C(\rho, X)=\text{tr}(\sqrt{\rho}X_{0}\sqrt{\rho}X_{0})$ with $X_{0}=X-\text{tr}\rho X$.
The first part $I(\rho,X)$ is the Wigner-Yanase skew information \cite{Wign1963}. The other part $C(\rho, X)$ arises from the mixedness of the state and thus is called classical uncertainty. 

With the expression of $ X_{0} $, the skew information can be written as
$$I(\rho,X)=\frac{1}{2} \text{tr}(i[\sqrt{\rho},X_{0}])^{2}.$$
Denote
$$J(\rho,X)=\frac{1}{2} \text{tr}\{\sqrt{\rho},X_{0}\} ^{2}.$$
Here $ \{X,Y\}=XY+YX $ represents the anti-commutator.  And we have the relation $J(\rho,X)=V(\rho,X)+ C(\rho, X) .$

For two arbitrary observables $ X $ and $ Y $, the uncertainty relation \cite{luo2005}
\begin{equation} \label{IJ1}
 I(\rho,X)J(\rho,Y) \geq\frac{1}{4}| {\rm tr} \rho [X,Y]|^{2}
\end{equation}
hold.

Furthermore, define the geometric average of $I(\rho,X)$ and $J(\rho,X)$ as $U(\rho,X)$. It can be verified that
\begin{eqnarray*}
U(\rho,X)&=&\sqrt{I(\rho, X)J(\rho, X)}\\
&=&\sqrt{V^{2}(\rho,X)-C^{2}(\rho, X)}.
\end{eqnarray*}
$ U(\rho,X) $ quantifies the quantum uncertainty of $\rho$ with respect of $X$, and satisfies the inequality constraints
$$ F(\rho,X) \leq U(\rho,X)\leq V(\rho,X),$$
where $ F(\rho,X)  $ is the SLD Fisher information. Besides, the two equalities hold if and only if $\rho$ is pure.

The information-refined uncertainty relation is valid \cite{luo2005}, where we replace variance with quantum uncertainty. Specifically, we have
\begin{equation}\label{mix1}
U(\rho,X)U(\rho,Y)\geq \frac{1}{4}|\textup{tr}\rho[X,Y]|^2.
\end{equation}
The main purpose of the article is to explore the states in a spin system for which equality holds. We refer to these states as minimum uncertainty states, as they exhibit certain classical characteristics. 

\section{Spin minimum uncertainty states for refined uncertainty relations}\label{sec:refunc}

In this section, we solve the minimum uncertainty states such that equality holds in Eq.\eqref{mix1} for spin systems. First, we clarify some basic notations in spin systems.

In a spin- $j$ (an integer or half-integer) system with Hilbert space $\mathbb{C}^{2j+1}$, the angular momentum observables satisfy the commutation relation
\begin{equation}
    [J_x,J_y]=iJ_z,\quad[J_y,J_z]=iJ_x,\quad[J_z,J_x]=iJ_y,
\end{equation}
and the ladder operators are denoted by $J_{\pm}=J_x\pm iJ_y$. The $2j+1$ Dicke states $\{\ket{j,m},m=-j,\cdots,j-1,j\}$ constitute an orthonormal basis of the spin system, and satisfy
\begin{align*}
&J_+\ket{j,j}=0,\\
&J_+\ket{j,m} =\sqrt{(j-m)(j+m+1)}\ket{j,m+1}, \qquad m\leq j-1,\\
&J_-\ket{j,-j} =0,\\
&J_-\ket{j,m}  =\sqrt{(j+m)(j-m+1)}\ket{j,m-1}, \qquad m\geq -j+1.
\end{align*}
Denote the observable vector $\boldsymbol{J}=(J_x,J_y,J_z)$ and for a unit vector $\boldsymbol{w}=(w_x,w_y,w_z)$, the rotation operator can be defined as

\begin{equation}
    R=e^{-i\theta \boldsymbol{w}\cdot \boldsymbol{J}},
\end{equation}
where $\theta$ is the rotation angle and $\cdot$ is the inner product of two vectors. The rotation operators are closely related to the displacement operator
\begin{equation}
    \Omega_{\tau}=e^{\tau J_+-\bar{\tau} J_-}
\end{equation}
via $R=\Omega_{\tau}e^{-i\alpha J_{z}}$ with some properly chosen parameters $\tau$ and $\alpha$. Here $\bar{\tau}$ is the complex conjugate.

The spin coherent states, which are regarded as classical states in the spin system, are defined as
\begin{equation}
    \ket{\zeta}=\Omega_{\tau}\ket{j,-j}=(1+|\zeta|^2)^{-j}\sum_{m=-j}^j\sqrt{2j\choose {j+m}}\zeta^{j+m}\ket{j,m},
\end{equation}
where $\tau=\frac\theta2 e^{i\phi}$ and $\zeta={\rm tan}\frac\theta2e^{i\phi}, \ \theta\in[0, \pi], \ \phi\in[0, 2\pi]$ \cite{Radc1971} .

It should be noticed that when disregarding a non-significant phase, two arbitrary coherent states can be interconverted via a displacement (or rotation) operator. The lowest weight state $\ket{j,-j}$ is also classified as a coherent state, which allows us to equivalently define the coherent states as $\Omega_{\tau}\ket{j,-j}$.

Now, we turn our attention back to the uncertainty relation given by Eq.\eqref{mix1} for spin systems. Let $ \boldsymbol{n}_{1},\boldsymbol{n}_{2},\boldsymbol{n}_{3} $ form an orthonormal basis for the spin-$j$ system. In Eq.\eqref{mix1}, we can set $ X=J_{\boldsymbol{n}_{1}} $ and $ Y=J_{\boldsymbol{n}_{2} } $, leading to the inequality
\begin{equation}\label{mix2} 
U(\rho,J_{\boldsymbol{n}_{1}})U(\rho,J_{\boldsymbol{n}_{2}})\geq \frac{1}{4}|\textup{tr}\rho J_{\boldsymbol{n}_{3}}|^2.
\end{equation}

There exists a rotation operator $R$ such that the three unit vectors, $\boldsymbol{n}_{1}$, $\boldsymbol{n}_{2}$, $\boldsymbol{n}_{3}$ can be rotated to the direction vectors of coordinate $x-$, $y-$, and $z-$axes, respectively. According to the unitary invariance of the quantity $U$, Eq. \eqref{mix2} can be written as
\begin{equation*}
U(R\rho R^{\dagger},J_{x})U(R\rho R^{\dagger},J_{y})\geq \frac{1}{4}|\textup{tr}R\rho R^{\dagger}J_{z}|^2. 
\end{equation*}
Consequently, if we find the solution for the minimum uncertainty states in this specific case,
\begin{equation}\label{mix3} 
U(\rho,J_x)U(\rho,J_y)\geq \frac{1}{4}|\textup{tr}\rho J_z|^2 ,
\end{equation}
the general form of minimum uncertainty states  for Eq. \eqref{mix2} can then be derived by $R^{\dagger}\rho R $ .

From the proof of Eq. \eqref{mix3}, the equality holds if and only if the  two conditions are satisfied simultaneously:

(a) There exists some constant $s\in \mathbb{R}$ such that 
\begin{equation}\label{eq1}
is[\sqrt{\rho} ,J_x]+\{ \sqrt{\rho}, J_{y} - \textup{tr}J_y \rho\} =0
\end{equation}
and actually $s$ satisfies $s^2I(\rho,J_x)=J(\rho,J_y-\textup{tr}J_y \rho).$

(b) There exists some constant $t\in \mathbb{R}$ such that
\begin{equation}\label{eq2}
it[\sqrt{\rho} ,J_y]+\{ \sqrt{\rho}, J_{x} - \textup{tr}J_x \rho\} =0
\end{equation}
and actually $t$ satisfies $t^2 I(\rho,J_y)=J(\rho,J_x-\textup{tr}J_x \rho).$

 Simplify the two above Eqs. \eqref{eq1} and \eqref{eq2} as
\begin{eqnarray} 
(s-1)(J_{-}\sqrt{\rho}-\sqrt{\rho}J_{+})+(s+1)(J_{+}\sqrt{\rho}-\sqrt{\rho}J_{-})-iu\sqrt{\rho}&=&0,\label{m3}\\
(t+1)(J_{-}\sqrt{\rho}+\sqrt{\rho}J_{+})+(1-t)(J_{+}\sqrt{\rho}+\sqrt{\rho}J_{-})-v\sqrt{\rho}&=&0,\label{m4}
\end{eqnarray}
Here, $u=4\tr (\rho J_y), v=4\tr(\rho J_x).$



\section{The matrix form of minimum uncertainty states}\label{sec:mat}

Suppose the square root of the state can be expressed as $$\sqrt{\rho}=\sum_{m,n} p_{m,n} \ketbra{j,n}{j,m}.$$ 
Denote $c_m=\sqrt{(j-m)(j+m+1)},\quad
-j\leq m\leq j,$ then Eqs. \eqref{m3} and \eqref{m4} equal to
\begin{eqnarray}
(s-1)(c_m p_{m+1,n}-c_n p_{m,n+1})+(s+1)(c_{m-1} p_{m-1,n}-c_{n-1} p_{m,n-1})-iu p_{m,n}&=&0,\label{m14}\\
(t+1)(c_m p_{m+1,n}+c_n p_{m,n+1})+(1-t)(c_{m-1} p_{m-1,n}+c_{n-1} p_{m,n-1})-v p_{m,n}&=&0.\label{m15}
\end{eqnarray}
To be clear, we solve the linear equations ~\eqref{m3} and \eqref{m4} for spin-$1/2$ and spin-$j$ systems separately.

Now let's assume that $s\neq -1$ and $t\neq 1$ so that the denominator is nonzero in the following calculation. The cases $ s=-1$ or $t=1$ can be treated analogously: identical arguments apply after replacing $1-t$ and $s+1$ in the denominator by $t+1$ and $s-1$, respectively. We return to these special cases at the end of this section.

\subsection{Spin-$1/2 $ system}

When $j=1 /2$, the boundary condition gives that $c_{\frac1 2}=0$ and $c_{-\frac{1}{2}}=1$. Denote   $p_{\frac1 2,\frac1 2}=x$, $p_{-\frac1 2,\frac1 2}=z$, and $p_{-\frac1 2,-\frac1 2}=y$, then the square root of the state has the matrix form
$$
\sqrt{\rho}=
\begin{pmatrix}
y & z\\
\bar{z} &x
\end{pmatrix}.
$$
Here, $x,y\in \mathbb{R}$, and $ z\in \mathbb{C}$.

Take $m=1 /2$ and $n=-1 /2$, Eqs. \eqref{m14} and \eqref{m15} become
\begin{eqnarray*}
y=\frac{s-1}{s+1}x+\frac{u^2}{2(s+1)^2}x+\frac{iuv}{2(1-t)(s+1)}x,\\
y=-\frac{t+1}{1-t}x+\frac{v^2}{2(1-t)^2}x-\frac{iuv}{2(1-t)(s+1)}x.
\end{eqnarray*}
Compare the above two equations, and we have $uv=0.$
Take $m=n=1 /2$, then Eqs. \eqref{m14} and \eqref{m15} become
\begin{eqnarray*}
z=\frac{v}{2(1-t)}x+\frac{iu}{2(s+1)}x,\\
\bar{z}=\frac{v}{2(1-t)}x-\frac{iu}{2(s+1)}x.
\end{eqnarray*}
Take $m=n=-1/ 2$ in Eqs. \eqref{m14} and \eqref{m15}, and there are
\begin{eqnarray*}
(s-1)(\bar{z}-z)=iuy,\\
(t+1)(\bar{z}+z)=vy.
\end{eqnarray*}
Combining the six equations above with the constraint $uv=0$, we find that the parameters $u,v,s,t$ must satisfy one of the following conditions:

\begin{enumerate}[i.]
\item $u=v=0,\quad s+t=0.$\label{en:simple1}
\item $u=0,\quad v\neq0,\quad st=-1.$\label{en:simple2}

\end{enumerate}

If the condition \ref{en:simple1} is satisfied and $s\neq 1$, the matrix can be expressed explicitly as
$$
\sqrt{\rho}=
\begin{pmatrix}
\frac{s-1}{s+1}x & 0\\
0 &x
\end{pmatrix}.
$$
Moreover, from the normalization condition, ${\rm tr} \rho=1$, there is
$$
\rho=
\begin{pmatrix}
\frac{(s^2-1)^2}{2(s+1)^{2}(s^2+1)} & 0\\
0 &\frac{(s+1)^2}{2(s^2+1)}
\end{pmatrix}.
$$
If condition \ref{en:simple2} is satisfied, 
$$
\sqrt{\rho}=x
\begin{pmatrix}
\frac{t+1}{1-t} & \pm \sqrt{\frac{t+1}{1-t}}\\
\pm \sqrt{\frac{t+1}{1-t}} &1
\end{pmatrix},
$$
It can be directly observed that $\rho=\sqrt{\rho}=\ketbra{\psi}{\psi}$is a pure state with $\ket{\psi}=\pm \sqrt{\frac{1+t}{2}}\ket{\frac{1}{2},-\frac{1}{2}}+\sqrt{\frac{1-t}{2}}\ket{\frac{1}{2},\frac{1}{2}}.$

\subsection{General spin systems}

Now we consider the general case where $j\geq 1$. The calculation process analogous to that of the simpler case, where $j=1/2$.

We can simplify Eqs. \eqref{m14} and \eqref{m15} as
\begin{eqnarray}
c_{m-1} p_{m-1,n}&=&\bar{q}p_{m,n}-kc_{m} p_{m+1,n}-lc_{n} p_{m,n+1},\label{j1}\\
c_{n-1} p_{m,n-1}&=&qp_{m,n}-lc_{m} p_{m+1,n}-kc_{n} p_{m,n+1}.\label{j2}
\end{eqnarray}
Here, we define the new parameters 
\begin{eqnarray*}
q&=&\frac{v}{2(1-t)}-\frac{iu}{2(s+1)},\\
k&=&\frac{s+t}{(s+1)(1-t)},\\
l&=&\frac{st+1}{(s+1)(1-t)}.
\end{eqnarray*}

It's reasonable to assume $p_{j,j} \neq0$, otherwise, when $p_{j,j}=0$, then all arbitrary elements become zero, i.e. $p_{m,n}=0,\forall m,n$, which indicates that $ \rho=0 .$  Without loss of generality, we can set $p_{j,j}=1$, and the desired state can be obtained through a normalization process.

The three parameters $q,k,l$ are closely related, as are the originally defined four parameters  $s,t,u,v$. We clarify that these parameters satisfy the proposition outlined below, with detailed proofs provided in the appendix. 

\begin{proposition}\label{prop}
    The four parameters $s,t,u,v$ 
    must satisfy one of the conditions:
\begin{enumerate}
    \item $s=-t  $, $v=u=0$.\label{cond1}
    \item $st=-1$, $u=0$. \label{cond2}
\end{enumerate}
\end{proposition}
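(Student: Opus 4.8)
The plan is to pass from the operator equations \eqref{m3}--\eqref{m4} to the scalar system \eqref{j1}--\eqref{j2} for the Dicke coefficients $p_{m,n}$ of $\sqrt{\rho}$, fix the normalisation $p_{j,j}=1$, and prove that any nonzero solution forces $ql=0$; the two alternatives $q=0$ and $l=0$ then translate directly into the stated conditions. Throughout I use that $c_m=\sqrt{(j-m)(j+m+1)}$ satisfies $c_j=c_{-j-1}=0$ while $c_m>0$ for $-j\le m\le j-1$, and in particular $c_{j-1}=\sqrt{2j}\neq0$ because $j\ge1$, so that coefficients with an out-of-range index vanish.

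The core step is $ql=0$. Since $c_j=0$, equation \eqref{j1} at $m=j$ reduces to $c_{j-1}p_{j-1,n}=\bar q\,p_{j,n}-l\,c_n\,p_{j,n+1}$, so the entire $(j-1)$-th row is an explicit linear function of the $j$-th row; independently, \eqref{j2} at $m=j-1$ is a three-term recurrence in $n$ for that same row. Substituting the first relation into the second, and simplifying with the $j$-th-row recurrence $c_{n-1}p_{j,n-1}=q\,p_{j,n}-k\,c_n\,p_{j,n+1}$ (which is \eqref{j2} at $m=j$, again via $c_j=0$) used in two shifted forms, all terms containing $\bar q$ and $|q|^2$ cancel and one is left with $l\,(c_{j-1}^{2}-c_{n-1}^{2}+c_n^{2})\,p_{j,n}=0$ for every $n$. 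The elementary identity $c_{j-1}^{2}-c_{n-1}^{2}+c_n^{2}=2(j-n)$ then gives $2l(j-n)p_{j,n}=0$ for all $n$; taking $n=j-1$, where $p_{j,j-1}=q/c_{j-1}$, yields $lq=0$.

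Next I record the shape of the $j$-th row and dispose of the degenerate cases. Iterating $c_{n-1}p_{j,n-1}=q\,p_{j,n}-k\,c_n\,p_{j,n+1}$ from $p_{j,j}=1,\ p_{j,j+1}=0$ shows by induction that $p_{j,j-i}$ is a polynomial in $q,k$ whose monomials all have the form $\pm(\text{positive})\,q^{a}(-k)^{b}$ with $a+2b=i$, and whose leading ($q^{i}$) coefficient is $(c_{j-1}c_{j-2}\cdots c_{j-i})^{-1}>0$. Two corollaries follow: (i) if $k=0$, then $p_{j,j-i}=q^{i}/(c_{j-1}\cdots c_{j-i})$, and the boundary relation from \eqref{j2} at $m=j,\ n=-j$ — which, since $c_j=c_{-j-1}=0$ and $k=0$, reads $q\,p_{j,-j}=0$ — forces $q=0$; (ii) if $k\neq0$ and $q=0$, then $p_{j,j-2}=-kc_{j-1}/c_{j-2}\neq0$, and the identity of the previous paragraph at $n=j-2$, i.e.\ $4l\,p_{j,j-2}=0$, forces $l=0$. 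Combining with $ql=0$: if $q=0$, then the real and imaginary parts of $q=\tfrac{v}{2(1-t)}-\tfrac{iu}{2(s+1)}$ give $u=v=0$, while (i)--(ii) force $k=0$ (hence $s=-t$, condition \ref{cond1}) or $l=0$ (hence $st=-1$, condition \ref{cond2}); if $q\neq0$, then $l=0$, i.e.\ $st=-1$, and $k\neq0$ by (i). In the latter case \eqref{j1}--\eqref{j2} decouple into a recurrence in $m$ and one in $n$, forcing $p_{m,n}=f_m g_n$, so $\sqrt{\rho}$ has rank one and $\rho$ is a pure state; feeding this pure state into \eqref{m3} exhibits $\ket{g}$ as an eigenvector of $(s+1)J_{+}+(s-1)J_{-}$ with eigenvalue $\tfrac12(sv+iu)$, which must be real when $s^{2}>1$ and purely imaginary when $s^{2}<1$, so $u=0$ or $v=0$; the alternative $v=0$ is brought to $u=0$ by the discrete symmetry exchanging $J_x\leftrightarrow J_y$ (a $\pi/2$ rotation about $z$, which preserves $st=-1$), landing on condition \ref{cond2}. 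The special cases $s=-1$ or $t=1$ follow from the same computation with $1-t$ and $s+1$ replaced by $t+1$ and $s-1$.

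I expect the first step to be the main obstacle: one must recognise that the $(j-1)$-th row can be generated in two independent ways, and then check that the resulting compatibility condition — after two uses of the $j$-th-row recurrence and the identity $c_{j-1}^{2}-c_{n-1}^{2}+c_n^{2}=2(j-n)$ — collapses all the way down to the single monomial $2l(j-n)p_{j,n}$. Everything after that (the polynomial form of the $j$-th row, corollaries (i) and (ii), and the rank-one analysis of the $l=0$ branch) is a routine induction or a short direct computation; the only additional point that needs slight care is reducing the $l=0$ branch to the normalisation $u=0$ via the $J_x\leftrightarrow J_y$ symmetry.
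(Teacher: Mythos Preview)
Your overall strategy matches the paper's: establish $ql=0$, record the polynomial structure of the $j$-th row, and then split into the cases $k=0$ versus $k\neq0$, translating back to $s,t,u,v$ at the end. The logical skeleton is the same, and your derivations check out.

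Where you genuinely differ is in the \emph{mechanism} behind the key lemmas. The paper proves $ql=0$ (its Lemma~1) by computing the five specific entries $p_{j-1,j},\,p_{j,j-1},\,p_{j-1,j-1},\,p_{j,j-2},\,p_{j-1,j-2}$ and comparing two expressions for the last one, using $2c_{j-1}^2=4j$, $c_{j-2}^2=4j-2$. You instead derive the uniform compatibility identity $l\,(c_{j-1}^2-c_{n-1}^2+c_n^2)\,p_{j,n}=2l(j-n)p_{j,n}=0$ for all $n$, and then specialise. This is cleaner, and it pays a second dividend: the paper's Lemma~4 ($k\neq0,\ q=0\Rightarrow l=0$) is proved there by first arguing that $j$ must be an integer and then invoking the boundary relation at $m=j-1,\ n=-j$; your identity at $n=j-2$ gives $4l\,p_{j,j-2}=0$ with $p_{j,j-2}=-kc_{j-1}/c_{j-2}\neq0$, which is a one-line replacement. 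Your treatment of the $l=0$ branch is also more explicit than the appendix: you observe the rank-one factorisation $p_{m,n}=f_mg_n$, pass to the eigenvalue problem for $(s+1)J_{+}+(s-1)J_{-}$, and read off $u=0$ (or $v=0$) from the reality of the spectrum when $k>0$ (resp.\ $k<0$). The paper relegates this last step to the main text (via the condition that $n=\bar q/(2\sqrt{k})$ be a Dicke label), and does not separately discuss $k<0$; your $J_x\leftrightarrow J_y$ symmetry remark covers that case and is consistent with the paper's convention of classifying solutions up to rotation.

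Two small points worth tightening if you write this out in full: (i) the factorisation $p_{m,n}=f_mg_n$ under $l=0$ deserves the two-line downward induction in $m$ rather than just the word ``decouple''; (ii) the symmetry step shows that the $v=0$ branch is carried into the $u=0$ branch for the \emph{rotated} state, so the statement of the Proposition is to be read up to that rotation---exactly as the paper uses it downstream.
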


With this proposition, we can obtain the exact matrix form of minimum uncertainty states for Eq. \eqref{mix3}. When parameters satisfy Condition \ref{cond1}, $u=v=0$ and $s=-t$. Therefore, Eqs. \eqref{m3} and \eqref{m4} become
\begin{eqnarray}
    (s-1)(J_{-}\sqrt{\rho}-\sqrt{\rho}J_{+})+(s+1)(J_{+}\sqrt{\rho}-\sqrt{\rho}J_{-})&=0,\\
    -(s-1)(J_{-}\sqrt{\rho}+\sqrt{\rho}J_{+})+(s+1)(J_{+}\sqrt{\rho}+\sqrt{\rho}J_{-})&=0
\end{eqnarray}
By adding these two equations together, we find that the state $ \rho $ should satisfy
\begin{equation*}
(s+1)J_{+}\sqrt{\rho}=(s-1)\sqrt{\rho}J_{+}.
\end{equation*}
This condition can be expressed in terms of the matrix elements as
\begin{equation*}
(s+1)c_{m-1} p_{m-1,n}=(s-1)c_n p_{m,n+1}.
\end{equation*}
From the iterative relationship, we derive that
\begin{equation*}
p_{m-k,m}=\frac{(s-1)c_{m}}{(s+1)c_{m-k+1}}p_{m-k+1,m+1}=...=0,
\end{equation*}
and
\begin{equation*}
p_{m,m}=s'p_{m+1,m+1}=...=s'^{j-m},
\end{equation*}
where 
\begin{equation}\label{eq:s}
    s'=(s-1)/(s+1).
\end{equation}

As a result, $\rho$ is a diagonal state and the $n_{\rm th}$ element is
\begin{equation}
\rho_{n,n}=s'^{2(j-n)}x^{2}.
\end{equation}
By applying the normalization condition $\textup{tr}\rho=1$, it can be obtained that $  x=\sqrt{\frac{1-s'^2}{1-s'^{4j+2}}}. $

Denote $ \beta=2 \ln |s'| $ and the state can be written as the Gibbs state
$$\rho=\frac{e^{-\beta J_{z}}}{\text{tr}e^{-\beta J_{z}}}.$$

It is important to note that, in the special case $s=1,t=-1,u=v=0$, Eqs. \eqref{m3} and \eqref{m4} reduce to the condition $J_{+}\sqrt{\rho}=0$, which implies $\rho=\ketbra{j,j}{ j,j}.$ This state can be transformed into an arbitrary spin coherent state via displacement operators. This special case illustrates that spin coherent states may be regarded as minimum uncertainty states, aligning with the fact that the coherent states possess the least quantumness \cite{PhysRevA.100.062114}.

When parameters meet Condition \ref{cond2}, Eqs. \eqref{m3} and \eqref{m4} become
\begin{eqnarray}
    (s-1)(J_{-}\sqrt{\rho}-\sqrt{\rho}J_{+})+(s+1)(J_{+}\sqrt{\rho}-\sqrt{\rho}J_{-})&=iu\sqrt{\rho},\\
    (s-1)(J_{-}\sqrt{\rho}+\sqrt{\rho}J_{+})+(s+1)(J_{+}\sqrt{\rho}+\sqrt{\rho}J_{-})&=sv\sqrt{\rho}
\end{eqnarray}
Add these two equations, and there is
\begin{equation}\label{j2}
J_{+}\sqrt{\rho}=\bar{q}\sqrt{\rho}-kJ_{-}\sqrt{\rho}.
\end{equation}
Here, $k=(s-1)/(s+1)$  and $q=(vs)/(2s-2)-(iu)/(2s+2)$.
\begin{lemma}
    The state satisfying Eq. \eqref{j2} is pure.
\end{lemma}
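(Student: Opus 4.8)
The plan is to show that Eq.~\eqref{j2} by itself forces $\sqrt{\rho}$ to have rank at most one; once that is established the lemma is immediate, since a rank-one Hermitian positive semidefinite operator is necessarily of the form $\lambda\ketbra{\psi}{\psi}$ with $\lambda\ge 0$ and $\ket{\psi}$ a unit vector, and then $\rho\neq 0$ together with $\textup{tr}\,\rho=1$ gives $\lambda^2=1$, i.e. $\rho=\ketbra{\psi}{\psi}$.

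First I would expand Eq.~\eqref{j2} in the Dicke basis. Writing $\sqrt{\rho}=\sum_{m,n}p_{m,n}\ketbra{j,n}{j,m}$ and using $J_+\ket{j,n}=c_n\ket{j,n+1}$, $J_-\ket{j,n}=c_{n-1}\ket{j,n-1}$, the operator identity $J_+\sqrt{\rho}=\bar q\sqrt{\rho}-kJ_-\sqrt{\rho}$ becomes, after reading off the coefficient of $\ketbra{j,n}{j,m}$, the scalar relation $c_{n-1}p_{m,n-1}=\bar q\,p_{m,n}-k\,c_n\,p_{m,n+1}$ for all $-j\le m,n\le j$, with the convention $p_{m,n}\equiv 0$ for $|n|>j$. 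The decisive observation is that $J_\pm$ act only from the left, so the label $m$ never changes: for each fixed $m$ this is a single three-term recursion in $n$ alone. Then I would run the recursion downward starting from $n=j$: because $c_j=0$, the $n=j$ equation reduces to $c_{j-1}p_{m,j-1}=\bar q\,p_{m,j}$, and since $c_n\neq 0$ for every $-j\le n\le j-1$ one solves successively for $p_{m,j-1},p_{m,j-2},\dots,p_{m,-j}$, each as a fixed scalar multiple of $p_{m,j}$. Hence there are universal coefficients $f_{-j},\dots,f_{j}$ (depending only on $q,k,j$, with $f_j=1$) with $p_{m,n}=f_n\,p_{m,j}$ for all $m,n$, so $\sqrt{\rho}=\bigl(\sum_n f_n\ket{j,n}\bigr)\bigl(\sum_m p_{m,j}\bra{j,m}\bigr)$ has rank at most one. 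The surviving equation at $n=-j$ is only a consistency condition on $q,k,j$: if it fails, then $p_{m,j}=0$ for all $m$ and $\rho=0$, which is excluded since we normalized $p_{j,j}=1$; in any case it does not affect the rank bound. Finally, imposing Hermiticity of $\sqrt{\rho}$ forces the bra factor to be proportional to the conjugate of the ket factor, and positivity fixes the remaining scalar to be nonnegative, so $\sqrt{\rho}=\lambda\ketbra{\psi}{\psi}$ and therefore $\rho=\lambda^2\ketbra{\psi}{\psi}$ is pure.

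The only step that needs genuine care is the boundary bookkeeping at $n=\pm j$: one must check that $c_j=0$ makes the $n=j$ equation genuinely first order (so that each row of $\sqrt{\rho}$ carries exactly one free parameter $p_{m,j}$ and the whole matrix collapses to rank one), and that the extra equation at $n=-j$ merely constrains $q$ and $k$ rather than obstructing the rank-one structure. Everything else is the routine solution of a triangular linear recursion, and the passage from ``rank-one Hermitian positive semidefinite'' to ``pure state'' is standard.
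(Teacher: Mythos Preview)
Your proposal is correct and follows essentially the same route as the paper. Both arguments read the operator identity $J_{+}\sqrt{\rho}=\bar q\,\sqrt{\rho}-kJ_{-}\sqrt{\rho}$ as a three-term recursion in a single Dicke index (with the other index a passive parameter), use the vanishing of the boundary coefficient $c_j=0$ to start the recursion, and conclude that every row (or column) of $\sqrt{\rho}$ is a fixed scalar multiple of the top one. The paper writes this out as an explicit induction proving $p_{m,n}=\bar p_{j,m}\,p_{j,n}$, whereas you phrase it as ``triangular recursion $\Rightarrow$ universal coefficients $f_n$ $\Rightarrow$ rank at most one'' and then invoke Hermiticity and positivity of $\sqrt{\rho}$; these are the same idea in slightly different packaging, and your extra remark that the $n=-j$ equation is merely a consistency constraint on $q,k$ is a clean way to dispose of that boundary.
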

\begin{proof}
    Rewrite Eq. \eqref{j2} in the form
\begin{equation*}
c_{m-1} p_{m-1,n}=\bar{q}p_{m,n}-kc_{m} p_{m+1,n}.
\end{equation*}
To prove the state satisfying Eq. \eqref{j2} is pure, it is only necessary to prove that 
\begin{equation}\label{eq:pur}
    p_{m,n}=\bar{p}_{j,m}p_{j,n}
\end{equation}
Furthermore, the state has the form of $\rho=\sqrt{\rho}=\ketbra{\psi}{\psi},$ and $\ket{\psi}=\sum_{n} p_{j,n}\ket{j,n}.$ 

The Eq. \eqref{eq:pur} can be proven by induction. First, for $m=j-1$, we have
$$p_{j-1,n}=\frac{\bar{q}}{c_{j-1}}p_{j,n}=\bar{p}_{j-1,n}p_{j,n}.$$
Suppose Eq. \eqref{eq:pur} holds for the integer $m$ and we prove that it can also be established for $m-1$. Due to the recurrence relation, there is
\begin{eqnarray*}
p_{m-1,n}&=&\frac{\bar{q}}{c_{m-1}}p_{m,n}-k\frac{c_{m}}{c_{m-1}}p_{m+1,n}\\
&=&(\frac{\bar{q}}{c_{m-1}}p_{m,j}-k\frac{c_{m}}{c_{m-1}}p_{m+1,j})p_{j,n}\\
&=&\bar{p}_{j,m-1}p_{j,n}.
\end{eqnarray*}
Hence, Eq. \eqref{eq:pur} holds and the state is pure.
\end{proof}

Note that Eq. \eqref{j2} is equivalent to
\begin{equation}\label{j3}
\Big(\frac{1}{\sqrt{k}}J_+ +\sqrt{k}J_-\Big)\ket{\varphi}=\frac{\bar{q}}{\sqrt{k}}\ket{\varphi}.
\end{equation}

And the solution of \eqref{j3} is 
\begin{equation}\label{eq:solpur}
\ket{\varphi_{n}}=A_{n} e^{-\frac{1}{2} \ln k J_{z}} e^{-\frac{i\pi}{2}J_{y}}\ket{j,n},
\end{equation}
where 
\begin{equation}
    n=\frac{\bar{q}}{2\sqrt{k}}=\frac{vs}{4\sqrt{k}(s-1)}-\frac{iu}{4\sqrt{k}(s+1)},
\end{equation}
and $A_{n}$ is the normalization coefficient \cite{puri1994minimum}. To make the solution sensible, the parameters should fulfill that $n$ is an integer or a half-integer, thus, $u=0$.

To make the proof complete and fluent, here we verify that the state \eqref{eq:solpur} is indeed the solution of Eq. \eqref{j3}. Substitute the state, and we can obtain that 
\begin{equation}
    \begin{aligned}
        &\Big(\frac{1}{\sqrt{k}}J_+ +\sqrt{k}J_-\Big)\ket{\varphi_{n}}\\
        &= \Big(e^{-\frac{1}{2}\ln  k}J_+ +e^{\frac{1}{2}\ln k}J_-\Big)\ket{\varphi_{n}}\\
&= e^{-\frac{1}{2}\ln k J_z}(J_+ + J_-) e^{\frac{1}{2}\ln k J_z}\ket{\varphi_{n}}\\
&=2 e^{-\frac{1}{2}\ln k J_z}J_x e^{\frac{1}{2}\ln k J_z}\ket{\varphi_{n}}\\ 
&=2 e^{-\frac{1}{2}\ln k J_z} e^{-i\frac{\pi}{2}J_y}J_z e^{i\frac{\pi}{2} J_y} e^{\frac{1}{2} \ln k J_z}\ket{\varphi_{n}}\\
&=2A_n   e^{-\frac{1}{2}\ln k J_z} e^{-i\frac{\pi}{2}J_y}J_z \ket{j,n}\\
&=2n A_n \ket{\varphi_{n}}.
    \end{aligned}
\end{equation}

As a conclusion, the minimum uncertainty states must be the rotation of diagonal states or pure states. 
More specifically, when Condition \ref{cond1} holds, the minimum uncertainty state is the diagonal state in the form of 
\begin{equation}\label{eq:mixsol}
    \rho=\sum_{n}\rho_{n,n}R\ket{j,n}\bra{j,n}R^{\dagger},
\end{equation}
where $R$ is an arbitrary rotation operator and
\begin{equation*}
\rho_{n,n}=s'^{2(j-n)}\frac{1-s'^2}{1-s'^{4j+2}},s'\in \mathbb{R}.
\end{equation*}

 When Condition \ref{cond2} is established, the minimum uncertainty state is a pure state of the form  
\begin{equation}\label{eq:minpur}
\ket{\varphi_{m}}=A_{m} R e^{\beta J_{z}} e^{-\frac{i\pi}{2}J_{y}}\ket{j,m},
\end{equation}
where $\beta \in \mathbb{R}$. It should be noted that, in this case, the uncertainty relation \eqref{mix3} reduces to the original Heisenberg uncertainty relation, and Eq. \eqref{eq:minpur} has already been derived and discussed in the literature \cite{puri1994minimum,puri1997coherent}.

Finally, we remark that when $ s=-1$ or $t=1$, the proposition \ref{prop} remains valid after replacing $1-t$ and $s+1$ in the denominator by $t+1$ and $s-1$, respectively, following an entirely analogous analysis. Consequently, the conditions $s=-1$ and $t=1$ hold simultaneously. In this case, the minimum uncertainty state is the pure state $\ket{j,-j}$.

\section{The minimum uncertainty states in the Wick symbol representation}\label{sec:wic}
To determine the minimum uncertainty states associated with the refined uncertainty relation in Eq. \eqref{mix1} for bosonic systems, we specialize to the case where $X$ and $Y$ are identified with the position and momentum operators, respectively, and employ the Wick symbol as the principal analytical tool. Owing to Schwinger’s bosonic representation, a spin system can be mapped to a two-mode bosonic system via
\begin{eqnarray}
    J_+&=&a^{\dagger}b,\\
     J_-&=&ab^{\dagger},\\
     J_z&=&\frac{1}{2}(a^{\dagger}a-b^{\dagger}b).
\end{eqnarray}

Similar to the case for the pseudounitary group SU($1,1$), for which the associated Wick symbol calculus can be constructed from the bosonic Wick symbol \cite{Luo1997}, we develop a Wick symbol calculus for the group SU($2$) relevant to spin systems in this section. Using this framework, we derive the minimum uncertainty states and establish a one-to-one correspondence between the solutions obtained from the two approaches.

\subsection{Symbol calculus in spin systems}
In the Fock-Bargmann representation, every operator $T$ can be represented by the corresponding symbol 
\begin{equation}
    \widehat T(\xi, \eta)=\langle\xi||T||\eta\rangle,
\end{equation}
where $||\xi\rangle=\sum_{m=-j}^j\sqrt{2j\choose {j+m}}\xi^{j+m}\ket{j,m}$ is unnormalized coherent state.  Then the system Hilbert space is
$$\mathcal{H}=\{f: \mathcal{C}\to\mathcal{C}, holomorphic, ||f||^2<\infty\},$$
where the corresponding scalar product is written as 
$$\langle f_1|f_2\rangle=\int \bar f_1(z)f_2(z){\rm d}\mu(z),$$ 
with ${\rm d}\mu(z)=\frac{2j+1}{\pi(1+|z|^2)^{2j+2}}{\rm d}^2z.$
The actions of ladder operators in the Fock-Bargmann representation are
$$\widehat{J_+T}=(2j\bar\xi-\bar\xi^{2}\frac{{\rm d}}{{\rm d}\bar\xi})\widehat T, \qquad \widehat{J_-T}=\frac{{\rm d}}{{\rm d}\bar\xi}\widehat T,$$
$$\widehat{TJ_-}=(2j\eta-{\eta}^2\frac{{\rm d}}{{\rm d}\eta})\widehat T, \qquad \widehat{TJ_+}=\frac{{\rm d}}{{\rm d}\eta}\widehat T.$$

Let $T=\sqrt\rho$ denote the square root of the quantum state. It is important to note that the holomorphic function $\widehat{T}$ must be a polynomial in $\bar\xi$ and $\eta,$ with degree at most $2j$ in each variable. To identify the states that saturate the information-refined uncertainty relation defined in Eq.  Eqs. \eqref{m3} and \eqref{m4} as the differential equations for $\widehat{T}$.

Under the special case $s=-1$ and $t=1$, the Eqs. \eqref{m3} and \eqref{m4} become
\begin{eqnarray*}
\frac{iu}2\widehat T&=&\frac{{\rm d}\widehat T}{{\rm d}\eta}-\frac{{\rm d}\widehat T}{{\rm d}\bar\xi},\label{eq3.0}\\
\frac v2\widehat T&=&\frac{{\rm d}\widehat T}{{\rm d}\bar\xi}+\frac{{\rm d}\widehat T}{{\rm d}\eta},\label{eq4.0}
\end{eqnarray*}
where $u=4{\rm tr}\rho J_y, v=4{\rm tr}\rho J_x,$ of which the solution is 
$$\widehat T(\xi, \eta)=ce^{\frac{v-iu}{4}\bar\xi+\frac{v+iu}{4}\eta}.$$
Thus, the only valid solution (being a polynomial) is 
$$\widehat T(\xi, \eta)=1,$$
which corresponds to the pure state $\rho=\ketbra{j,-j}{ j,-j},$
with $u=v=0.$


For $s\neq-1$ and $t\neq1$, let
\begin{eqnarray*}
    u'&=&\frac{u}{1+s}, \qquad v'=\frac{v}{1-t},\qquad 
     s'=\frac{s-1}{1+s}, \qquad t'=\frac{1+t}{1-t}.
\end{eqnarray*}
Note that the parameter $s'$ defined here coincides with the symbol introduced in Eq. \eqref{eq:s}. The Eqs. \eqref{m3} and \eqref{m4} become
\begin{eqnarray}
        (2j(\bar\xi-\eta)-iu')\widehat T&=&(\bar\xi^2-s')\frac{{\rm d}\widehat T}{{\rm d}\bar\xi}-(\eta^2-s')\frac{{\rm d}\widehat T}{{\rm d}\eta},\label{eq3}\\
(2j(\bar\xi+\eta)-v')\widehat T&=&(\bar\xi^2-t')\frac{{\rm d}\widehat T}{{\rm d}\bar\xi}+(\eta^2-t')\frac{{\rm d}\widehat T}{{\rm d}\eta}.\label{eq4}
\end{eqnarray}

As discussed in Section \ref{sec:mat}, the differential equations \eqref{eq3} and \eqref{eq4} admit valid solutions only when the parameters satisfy certain conditions; in this case, the solutions are holomorphic functions corresponding to the square roots of quantum states in spin systems. We therefore analyze the problem separately for the different parameter regimes.

When Condition \ref{cond1} is satisfied, we have $u=v=0$ and $s=-t$, and the newly defined parameters accordingly satisfy $u'=v'=0, s'=-t'$. In this case, we obtain
\begin{eqnarray}
        2j(\bar\xi-\eta)\widehat T&=&(\bar\xi^2-s')\frac{{\rm d}\widehat T}{{\rm d}\bar\xi}-(\eta^2-s')\frac{{\rm d}\widehat T}{{\rm d}\eta},\\
2j(\bar\xi+\eta)\widehat T&=&(\bar\xi^2-s')\frac{{\rm d}\widehat T}{{\rm d}\bar\xi}+(\eta^2-s')\frac{{\rm d}\widehat T}{{\rm d}\eta}.
\end{eqnarray}
And the solution is $$\widehat T(\xi, \eta)=c(\bar\xi\eta+s')^{2j},$$
which corresponds to the mixed states
$$\rho=\frac{\lambda^{2j}(1-\lambda^2)}{1-\lambda^{4j+2}}\sum_{m=-j}^j \lambda^{2m}\ketbra{j,m}{j,m},$$
where $c=\frac{\lambda^{j}\sqrt{1-\lambda^2}}{\sqrt{1-\lambda^{4j+1}}}$ and $s'=1/\lambda.$

When Condition \ref{cond2} is satisfied, we have $st=-1 $ and $ u=0$, equivalently, $t'=s'$ and $u'=0$. Under these conditions, the Eqs. \eqref{eq3} and \eqref{eq4} reduce to
\begin{equation}
    \begin{aligned}
        4j{\eta}\widehat T-v'\widehat T&=2({\eta}^2-s')\frac{{\rm d}\widehat T}{{\rm d}{\eta}},\\
        4j\bar\xi \widehat T-v'\widehat T&=2(\bar\xi^2-s')\frac{{\rm d}\widehat T}{{\rm d}\bar\xi}.
    \end{aligned}
\end{equation}
By leveraging the method of separation of variables, we obtain the solution of the differential equations,
\begin{equation}\label{eq39}
    \widehat T=c(\bar\xi+\sqrt{s'})^{j+\frac{v'}{4\sqrt{s'}}}(\bar\xi-\sqrt{s'})^{j-\frac{v'}{4\sqrt{s'}}}({\eta}+\sqrt{s'})^{j+\frac{v'}{4\sqrt{s'}}}({\eta}-\sqrt{s'})^{j-\frac{v'}{4\sqrt{s'}}}
\end{equation}
Moreover, when $v'=0$, the function
$$\widehat T(\xi, \eta)=c(\bar\xi^2-s')^j(\eta^2-s')^j$$
is a polynomial and therefore constitutes a valid solution. In the special case $s'=0$, Eq. \eqref{eq39} reduces to $\widehat T(\xi, \eta)=c\bar{\xi}^{2j} \eta^{2j}$ which corresponds to a spin coherent state $|j,j\rangle.$ When $v'\neq 0$ and $s'\neq 0$, Eq. \eqref{eq39} yields a valid solution provided that $v'/(4\sqrt{s'})$ is an integer or a half-integer, consistent with $j$ being an integer or half-integer.




\subsection{One-to-one correspondence}

In the foregoing discussion, we have solved the minimum uncertainty states using two different approaches. In this subsection, we unify the solutions obtained from these two formulations. Since the solutions fall into two distinct classes according to proposition \ref{prop}, we discuss them separately.

When  Condition \ref{cond1} is satisfied,  we have $u'=v'=0, s'=-t'$. The function
$$\widehat T(\xi, \eta)=c(\xi\bar\eta+s')^{2j}$$
with normalization constant $c=\frac{\sqrt{1-s'^2}}{\sqrt{1-s'^{4j+2}}}$ corresponds to the mixed state
$$\rho=\frac{s'^{2j}(1-s'^{2})}{1-s'^{4j+2}}
\sum_{m=-j}^j s'^{2m}\ketbra{j,m}{j,m},\qquad \lambda\in[0,1),$$ which is consistent with the expression given in Eq. \eqref{eq:mixsol}.

When  Condition \ref{cond2} is satisfied, we have $t'=s'$ and $u'=0$.It suffices to show that the states
\begin{equation*}
\ket{\varphi_{n}}=A_{n} e^{-\frac{1}{2} \ln k  J_{z}} e^{-\frac{i\pi}{2}J_{y}}\ket{j,n}
\end{equation*}
have the same expression as Eq. \eqref{eq39} when represented in terms of the Wick symbol.

We therefore compute the Wick-symbol representation of this state. First, we simplify the expression above. Since  $$\widehat T(\xi, \eta)= \langle \eta| \varphi_n\rangle\langle\varphi_n |\xi\rangle,$$ it is sufficient to evaluate only one of the two factors,
\begin{equation}\label{eq:half}
\begin{aligned}
    \langle \varphi_n|\xi\rangle&=\bar{A_n}\bra{j,n}e^{\frac{i\pi}{2}J_{y}}e^{-\frac{1}{2} \ln k J_{z}} \ket{\xi},
\end{aligned}
\end{equation}
and the remaining factor is obtained analogously by complex conjugation.

The unnormalized state has the form
\begin{equation}
    \begin{aligned}
e^{-\frac{1}{2} \ln k J_{z}} \ket{\xi}&=\frac{1}{(1+|\xi|^2)^j}\sum_{m=-j}^{j}e^{-\frac{1}{2} \ln k J_{z}}\sqrt{2j\choose {j+m}}\xi^{j+m}\ket{j,m}\\
&=\frac{k^{\frac{j}{2}}}{(1+|\xi|^2)^j}\sum_{m=-j}^{j}\sqrt{2j\choose {j+m}}\Big( \frac{\xi}{\sqrt{k}}\Big)^{j+m}\ket{j,m}\\
&=\Big(\frac{k+|\xi|^2}{\sqrt{k}+\sqrt{k}|\xi|^2}\Big)^j\ket{\frac{\xi}{\sqrt{k}}}.
    \end{aligned}
\end{equation}
To evaluate the state obtained by the action of the rotation operator $e^{\frac{i\pi}{2}J_{y}}$ acting on the coherent state $\ket{\frac{\xi}{\sqrt{k}}}$, we consider the $2\times 2$ matrix representation of the rotation group, which is a Lie group. In this representation, a frequently used basis of the Lie algebra is given by
\begin{equation}
    J_+=\begin{pmatrix}
        0&1\\
        0&0
    \end{pmatrix},
     J_-=\begin{pmatrix}
        0&0\\
        1&0
    \end{pmatrix},
     J_z=\begin{pmatrix}
        \frac{1}{2}&0\\
        0&- \frac{1}{2}.
    \end{pmatrix}
\end{equation}
Within this framework, the displacement operator takes the form
\begin{equation}
    \Omega(\alpha)
    =\frac{1}{\sqrt{1+|\alpha|^2}}\begin{pmatrix}
             1 & \alpha\\
                -\bar{\alpha}&    1
        \end{pmatrix}, \alpha\in\mathbb{C}.
\end{equation}
as established in this matrix representation \cite{arecchi1972atomic}.

The rotation operators can then be represented as $2\times 2$ matrices as follows:
\begin{equation}
    \begin{aligned}
        e^{\frac{i\pi}{2}J_{y}}&=   e^{\frac{\pi}{4}(J_+-J_-)}\\
        &=\Omega(\frac{\pi}{4})\\
        &=\begin{pmatrix}
             \frac{1}{\sqrt{2}} &  \frac{1}{\sqrt{2}}\\
                -\frac{1}{\sqrt{2}}&    \frac{1}{\sqrt{2}}
        \end{pmatrix},\\
    \Omega(\frac{\xi}{\sqrt{k}})   &=\Omega(\tau)\\
    &=\frac{1}{\sqrt{1+|\tau|^2}}\begin{pmatrix}
             1 & \tau\\
                -\bar{\tau}&    1
        \end{pmatrix},
    \end{aligned}
\end{equation}
where $\tau=\xi/\sqrt{k}$.

The product of the two rotation operators above is itself a rotation, which can be decomposed into a rotation about the $ z$-axis and a displacement operator, denoted by $e^{-i\Psi J_z}$ and $\Omega(T)$, respectively. Explicitly, 
\begin{equation}
    \begin{aligned}
        \Omega(\frac{\pi}{4})\Omega(\tau)&=\frac{1}{\sqrt{2(1+|\tau|^2)}}\begin{pmatrix}
             1 -\bar{\tau}& 1+\tau\\
               -1 -\bar{\tau}&    1-\tau
        \end{pmatrix}\\
        &=    \Omega(T)e^{-i\Psi J_z}\\
        &=\frac{1}{\sqrt{1+|T|^2}}\begin{pmatrix}
            e^{-i\Psi/2}& Te^{i\Psi/2}\\
               -\bar{T}e^{-i\Psi/2}&    e^{i\Psi/2}
        \end{pmatrix}.
    \end{aligned}
\end{equation}
Therefore, $T=(1+\tau)/(1-\tau)=(\sqrt{k}+\xi)/(\sqrt{k}-\xi)$ and $e^{i\Psi/2}=(1-\tau)/|1-\tau|=(\sqrt{k}-\xi)/|\sqrt{k}-\xi|$. And Eq.~\eqref{eq:half} becomes
\begin{equation}
    \begin{aligned}
        & \bar{A}_n\Big(\frac{k+|\xi|^2}{\sqrt{k}+\sqrt{k}|\xi|^2}\Big)^j\bra{j,n}\Omega(T)e^{-i\Psi J_z}    \ket{j,-j}\\
        &=\bar{A}_n e^{i j\Psi }\sqrt{2j\choose {j+n}}\Big(\frac{k+|\xi|^2}{\sqrt{k}+\sqrt{k}|\xi|^2}\Big)^j T^{j+n}\\
        &=\bar{A}_n \sqrt{2j\choose {j+n}}\Big(\frac{k+|\xi|^2}{\sqrt{k}+\sqrt{k}|\xi|^2}\Big)^j (\sqrt{k}+\xi)^{j+n}(\sqrt{k}-\xi)^{-(j+n)}(\sqrt{k}-\xi)^{2j}|\sqrt{k}-\xi|^{2j}\\
        &\bar{A}_n (-1)^{j-n}\sqrt{2j\choose {j+n}}\Big(\frac{k+|\xi|^2}{(1+|\xi|^2)|k-\sqrt{k}\xi|^2}\Big)^j (\sqrt{k}+\xi)^{j+n}(\xi-\sqrt{k})^{j-n}
    \end{aligned}
\end{equation}
Since $st=-1,$ we have $k=s'=(s-1)/(s+1)$ and $v'/(4\sqrt{s'})=(vs)/(4\sqrt{(s^2-1)})=n $. Hence, this expression is consistent with that in Eq.\eqref{eq39}.

\subsection {Comparison with bosonic case}

In Holstein-Primakoff representation, the angular momentum operators can be described by a one-mode bosonic field \cite{holstein1940field}, which is different from Schwinger's representation. Explicitly,
\begin{eqnarray*}
    J_{+}&=&a^{\dagger}\sqrt{2j-a^{\dagger }a},\\
    J_{-}&=&a^{\dagger}\sqrt{2j-a^{\dagger }a}a,\\
    J_{z}&=&a^{\dagger}a-j.\\
\end{eqnarray*}
In the limit case $j\rightarrow \infty$, the ladder operators and $J_z$ approaches to bosonic operators \cite{arecchi1972atomic},
\begin{eqnarray*}
    \lim_{j\rightarrow \infty}\frac{J_+}{\sqrt{2j}}&=&a^{\dagger},\\
      \lim_{j\rightarrow \infty}\frac{J_-}{\sqrt{2j}}&=&a,\\
        \lim_{j\rightarrow \infty}J_z +j&=&a^{\dagger}a.
\end{eqnarray*}
In the limit $j\to\infty$, Dicke states and spin coherent states converge to Fock states and bosonic coherent states, respectively. To illustrate this correspondence, we focus on the case of coherent states and consider the following limit,
\begin{equation}
    \begin{aligned}
    \ket{z} &= \lim_{j\rightarrow \infty}  \ket{\zeta}=    (1+|\zeta|^2)^{-j}e^{\zeta J_+}\ket{j,-j}\\
      &=\lim_{j\rightarrow \infty}(1-|\zeta|^2)^{-j}e^{\zeta\sqrt{2j}a^\dagger}\ket{0}.
    \end{aligned}
\end{equation}
Since the bosonic coherent state can be written as $\ket{z}=e^{-|z|^2/2}e^{z a^\dagger}\ket{0}$, the two parametrizations are related by $z=\sqrt{2j}\zeta$.
When  Condition \ref{cond1} is satisfied, under the limit $j\to\infty,$ the function
\begin{equation*}
    \widehat T(\xi, \eta)=c(\xi\bar\eta+s')^{2j}\to e^{\alpha\bar\beta/s'}
\end{equation*}
where $ \alpha=\sqrt{2j}\xi, \beta=\sqrt{2j}\eta$.
Hence, the limit is a Gaussian thermal state.

When  Condition \eqref{cond2} is satisfied, under the limit $j\to\infty$, the function 
\begin{equation*}
\begin{aligned}
     \widehat T(\xi, \eta)&=c(\xi+\sqrt{s'})^{j+\frac{v'}{4\sqrt{s'}}}(\xi-\sqrt{s'})^{j-\frac{v'}{4\sqrt{s'}}}(\bar{\eta}+\sqrt{s'})^{j+\frac{v'}{4\sqrt{s'}}}(\bar{\eta}-\sqrt{s'})^{j-\frac{v'}{4\sqrt{s'}}}\\
   &  \rightarrow     c'e^{\frac{\alpha^2+\bar\beta^2}{2s'}}   \end{aligned}
\end{equation*}
corresponds to a squeezed coherent state, which is pure Gaussian state.

By incorporating arbitrary rotation operations into the spin minimum uncertainty states—operations that correspond to displacement operations in the bosonic setting—we obtain two distinct classes of states: mixed states and pure states. In the mixed-state sector, the spin minimum uncertainty states converge, in the limit $j\rightarrow \infty$, to displaced thermal states. In contrast, the pure spin minimum uncertainty states correspond, in this limit, to pure Gaussian states.

It is worth noting that, in the bosonic context, minimum uncertainty states associated with the uncertainty relation 
\begin{equation}\label{eq:bosonic}
    U(\rho,Q)U(\rho,P)\geq \frac{1}{4},
\end{equation}
where $Q$ and $P$denote the position and momentum operators, respectively, have been fully characterized in the literature \cite{Fu2020}. It has been shown that Gaussian states exhaust the class of minimum uncertainty states for the refined uncertainty relation in bosonic systems. 

In the large-spin limit, one may intuitively expect the spin minimum uncertainty states to converge to their bosonic counterparts, since the spin uncertainty relation Eq. \eqref{mix3} bosonic uncertainty relation converges to the bosonic uncertainty relation Eq. \eqref{eq:bosonic}. However, when taking this limit at the level of the spin solutions, only pure Gaussian states and displaced thermal states are recovered, while general mixed Gaussian states are absent. This discrepancy originates from the boundary conditions intrinsic to the spin system. More concretely, Schwinger’s bosonic representation reveals that a spin system corresponds to a two-mode bosonic system with a fixed total excitation number. This constraint enforces conservation of the total photon number and has no analogue in the unconstrained bosonic minimum uncertainty states. As a result, general mixed Gaussian states, which require fluctuations in the total excitation number, are excluded in the spin setting, whereas no such restriction arises when solving the minimum uncertainty problem directly in the bosonic system.

 


\section{Summary}\label{sec:sum}

In this work, we investigate the minimum uncertainty states associated with the information-refined uncertainty inequality in general spin systems. We derive explicit expressions for these minimum uncertainty states using two different methods and establish a one-to-one correspondence between the resulting solutions. To make a comparison with the bosonic case, we analyze the large-spin limit $j\rightarrow \infty$, and show that the spin minimum uncertainty states converge to a specific subclass of Gaussian states, which are themselves minimum uncertainty states under the bosonic case. For future work, we will explore minimum uncertainty states within a more general theoretical framework and investigate their potential applications and physical implications.

\vskip 0.5cm \noindent {\bf Acknowledgements}.
 This work was supported by the National Natural Science  Foundation of China, Grant Nos. 12401609 and 12471433, and the Youth Innovation Promotion Association of CAS, Grant No. 2023004, and the Beijing Natural Science Foundation, Grant No. Z250004.

\appendix
\section{proof of Proposition 1}
    \begin{lemma}\label{lem:1}
    The parameters satisfy $ql=0.$
\end{lemma} 
\begin{proof}
 In Eqs. \eqref{j1} and \eqref{j2}, let $m=j$ and $ n=j$, then
\begin{eqnarray*}
p_{j-1,j}&=&\frac{\bar{q}}{c_{j-1}},\\
p_{j,j-1}&=&\frac{q}{c_{j-1}}.
\end{eqnarray*}
In Eqs. \eqref{j1} and \eqref{j2}, let $m=j$ and $n=j-1$, then
\begin{eqnarray*}
p_{j-1,j-1}&=&\frac{\bar{q}q}{c_{j-1}^{2}}-l,\\
p_{j,j-2}&=&\frac{q^2}{c_{j-1}c_{j-2}}-\frac{kc_{j-1}}{c_{j-2}}.
\end{eqnarray*}
If $m=j$ and $n=j-2$ in Eqs. \eqref{j1} and \eqref{j2}, it can be obtained 
\begin{equation*}
p_{j-1,j-2}=\frac{\bar{q}q^2}{c_{j-1}^{2}c_{j-2}}-\frac{\bar{q}k}{c_{j-2}}-\frac{lqc_{j-2}}{c_{j-1}^{2}}.
\end{equation*}
Finally, let $m=j-1$ and $n=j-1$ in Eqs. \eqref{j1} and \eqref{j2}, and there is
\begin{equation*}
p_{j-1,j-2}=\frac{\bar{q}q^2}{c_{j-1}^{2}c_{j-2}}-\frac{\bar{q}k}{c_{j-2}}-\frac{2lq}{c_{j-2}}.
\end{equation*}
Compare the two expressions of $p_{j-1,j-2}$ in the two above equations and it can be obtained that
\begin{equation*}
\frac{2lq}{c_{j-2}}=\frac{lqc_{j-2}}{c_{j-1}^{2}},
\end{equation*}
since $2c_{j-1}^2=4j$ and $c_{j-2}^2=4j-2$,  we have proved that $ql=0.$
\end{proof}

\begin{lemma}\label{lem:2}
     For a fixed spin number $j(j\geq 1)$, the elements of the $j_{\rm th}$ row in the matrix $\sqrt{\rho}  $ have the following form,
\begin{equation}\label{a1}
p_{j,j-2h}=a_{2h}^{(h)}q^{2h}-a_{2h-2}^{(h)}kq^{2h-2}+...+a_{0}^{(h)}(-k)^h,
\end{equation}
\begin{equation} \label{a2}
p_{j,j-2h-1}=a_{2h+1}^{(h)}q^{2h+1}-a_{2h-1}^{(h)}kq^{2h-1}+...+a_{1}^{(h)}q(-k)^h.
\end{equation}
Here, $h=0,1,\cdots,j-1,j $ and $a_{n}^{(h)}$ is the relevant non-negative coefficient with respect to $h$.
\end{lemma}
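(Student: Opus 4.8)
The plan is to prove the two claimed forms by induction on $h$, using the recurrence relations \eqref{j1} and \eqref{j2} restricted to the top row $m=j$, together with Lemma \ref{lem:1}. Since $c_j = 0$, setting $m=j$ in \eqref{j1} kills the term $c_m p_{m+1,n}$, so we obtain the simplified recurrence $c_{j-1} p_{j-1,n} = \bar q\, p_{j,n} - l\, c_n p_{j,n+1}$; but this expresses row $j-1$ in terms of row $j$, which is the wrong direction. Instead I would set $m=j$ in \eqref{j2}, which gives
\begin{equation*}
c_{n-1} p_{j,n-1} = q\, p_{j,n} - l\, c_j p_{j,n+1} - k\, c_n p_{j,n+1} = q\, p_{j,n} - k\, c_n p_{j,n+1},
\end{equation*}
a genuine downward recurrence within the top row: it determines $p_{j,n-1}$ from $p_{j,n}$ and $p_{j,n+1}$. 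This is a three-term recurrence in $n$, seeded by $p_{j,j}=1$ and $p_{j,j-1}=q/c_{j-1}$ (the base cases already computed in the proof of Lemma \ref{lem:1}), and I would run the induction on the index $n$ decreasing from $j$, equivalently on $h$.

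First I would record the base cases: $h=0$ gives $p_{j,j}=1=a_0^{(0)}(-k)^0$ with $a_0^{(0)}=1$, and $p_{j,j-1}=q/c_{j-1}=a_1^{(0)}q$ with $a_1^{(0)}=1/c_{j-1}\geq 0$. For the inductive step, assume \eqref{a1} and \eqref{a2} hold for all indices down to $j-2h-1$; I would then apply the recurrence twice. Using it once with $n = j-2h-1$ yields $c_{j-2h-2}\,p_{j,j-2h-2} = q\, p_{j,j-2h-1} - k\, c_{j-2h-1}\, p_{j,j-2h}$, and substituting the inductive expressions \eqref{a2} for $p_{j,j-2h-1}$ and \eqref{a1} for $p_{j,j-2h}$ shows the right-hand side is a polynomial of the form $\sum_\ell (\pm) a^{(h)}_{\ast} k^{?} q^{?}$ with exactly the alternating sign pattern and monomial structure of \eqref{a1} at level $h+1$; dividing by $c_{j-2h-2}>0$ defines the coefficients $a^{(h+1)}_{2h+2-2r}$ as non-negative combinations of the previous ones (here I must check that $q \cdot q^{2h-1}$ contributes $+a^{(h)}_{2h-1}q^{2h}$ and $-k\cdot(-a^{(h)}_{2h-2}kq^{2h-2})$ contributes $+a^{(h)}_{2h-2}k^2 q^{2h-2}$, so both terms of a given $k$-degree add with the same sign — this is the crux of why the coefficients stay non-negative and the signs strictly alternate). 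Applying the recurrence once more with $n=j-2h-2$ produces $p_{j,j-2h-3}$ and establishes \eqref{a2} at level $h+1$ by the identical bookkeeping.

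The main obstacle I anticipate is verifying that the sign pattern is genuinely the strict alternation $q^{2h},\, -kq^{2h-2},\, +k^2 q^{2h-4},\dots$ rather than something messier, and simultaneously that every coefficient $a^{(h)}_n$ stays non-negative; these two facts are coupled, since the sign alternation is precisely what makes the two contributions to each monomial (one from the $q\cdot(\,\cdot\,)$ term, one from the $-k\cdot(\,\cdot\,)$ term) reinforce rather than cancel. I would handle this by carrying the sign as an explicit factor $(-1)^r$ in front of $a^{(h)}_{\ast}k^r$ throughout, so that the recurrence for the $a$'s becomes manifestly a sum of non-negative terms. A secondary point worth stating explicitly is that the recurrence never runs past the physical range: once $n$ reaches $-j$ the coefficients $c_{n-1}$ would vanish, but the degree bound (a polynomial in $q$ of degree at most $2j$, matching $\widehat T$ being a polynomial of degree $\le 2j$) guarantees consistency, and in fact the vanishing of $c_{-j}$ is exactly the boundary condition that will later force the parameter constraints in Proposition \ref{prop}.
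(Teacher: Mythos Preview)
Your proposal is correct and follows the same inductive strategy as the paper, which merely states the base case and asserts that the inductive step ``can be directly verified''; you have supplied exactly that verification by isolating the row-$j$ recurrence $c_{n-1}p_{j,n-1}=q\,p_{j,n}-k\,c_n p_{j,n+1}$ from \eqref{j2} (note that the $l$-term vanishes because $c_j=0$, so Lemma~\ref{lem:1} is not actually needed here). Aside from a harmless typo in the vanishing $l$-term and a dropped factor of $c_{j-2h-1}$ in your sign check, the argument is sound and more explicit than the paper's.
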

\begin{proof}
    We prove the lemma by induction. First, when $h=0$,
\begin{eqnarray*}
p_{j,j}&=&1,\\
p_{j,j-1}&=&\frac{q}{c_{j-1}}.
\end{eqnarray*}
Eqs. \eqref{a1} and \eqref{a2} hold. Suppose that for a positive integer $h$, Eqs. \eqref{a1} and \eqref{a2} hold, it can be directly verified that these two equations can be established for $h+1$.
\end{proof}
\begin{lemma}\label{lem:3}
    If $k=0$, and there is $q=0.$
\end{lemma}

\begin{proof}
    According to Lemma \eqref{lem:2}, if $k=0$, we have
\begin{equation*}
p_{j,-j}=a_{2j}^{(j)}q^{2j}.
\end{equation*}
Take $m=j$  and $n=-j$ in Eqs. \eqref{j1} and \eqref{j2}, then
\begin{equation*}
0=a_{2j}^{(j)}q^{2j+1},
\end{equation*}
which implies that $q=0.$
\end{proof}
\begin{lemma}\label{lem:4}
    If $k\neq0$ and $q=0$, it can be obtained that $l=0.$
\end{lemma}
\begin{proof}
    We first prove when $k\neq0,q=0$, $j$ is an integer. Otherwise, suppose $j$ is a half-integer, then $p_{j,-j+1}\propto (-k)^{\frac{2j-1}{2}}$. Let $m=j$  and $n=-j$ in Eq. \eqref{j1}, then $k=0$, which is a contradiction.

Next, from observation in proof of Lemma \eqref{lem:1}, when $q=0$ and $j$ is an integer, the elements satisfy 
\begin{eqnarray*}
p_{j,-j}&=&a_{0}^{(j)}(-k)^{j},\\
p_{j,-j+2}&=&a_{0}^{(j-1)}(-k)^{j-1}.
\end{eqnarray*}
In Eqs. \eqref{j1} and \eqref{j2}, let $m=j$ and  $n=-j+1$, then 
\begin{equation*}
p_{j-1,-j+1}=(-1)^{j}\frac{c_{-j+1}}{c_{j-1}}a_{0}^{(j-1)}lk^{j-1},
\end{equation*}
In Eqs. \eqref{j1} and \eqref{j2}, let $m=j-1$ and $n=-j$, then
\begin{eqnarray*}
0&=&-lc_{j-1}p_{j,-j}-kc_{-j}p_{j-1,-j+1}\\
&=&(-1)^{j+1}c_{j-1}a_{0}^{(j)}lk^{j}+(-1)^{j+1}c_{j-2}a_{0}^{(j-1)}lk^{j}.
\end{eqnarray*}
Thus, $l=0.$

\end{proof}
Combining lemmas \ref{lem:1}, \ref{lem:3}, and \ref{lem:4}, it can be directly observed that the three parameters must satisfy one of the two conditions:

\begin{enumerate}
    \item $k=0  $, $q=0$, and there is no constraint on $l$.
    \item $k\neq 0  $, $l=0$, and there is no constraint on $q$.
\end{enumerate}
We replace the two conditions with the terms $s,t,u,v$ , then obtain the two conditions in Proposition 1:
\begin{enumerate}
    \item $s=-t  $, $v=u=0$.
    \item $st=-1$, $u=0$. 
\end{enumerate}

\bibliographystyle{apsrev4-2}
\bibliographystyle{apsrev4-1}
\bibliography{bibsmus}
\end{document}